
%
\documentclass[runningheads]{llncs}
\usepackage{graphicx}
%
\usepackage{color}
\usepackage{hyperref}
\usepackage{float}

\usepackage{bm}

\usepackage[X2,T2A]{fontenc}

\newcommand{\bx}{\mathbf{x}}
\newcommand{\by}{\mathbf{y}}
\newcommand{\Bf}{\mathbf{f}}
\newcommand{\bg}{\mathbf{g}}
\newcommand{\bh}{\mathbf{h}}
\newcommand{\bp}{\mathbf{p}}
\newcommand{\bq}{\mathbf{q}}
\newcommand{\bz}{\mathbf{z}}
\newcommand{\br}{\mathbf{r}}

\usepackage[ruled,vlined,linesnumbered]{algorithm2e}
\usepackage{enumitem}

\usepackage{amsmath,amssymb}
\usepackage{multirow} 
\usepackage{tikz}
\usepackage{forest}
\usepackage{comment}
\usepackage{graphicx}
\usepackage{subcaption}

\begin{document}

\title{Dissipative quadratizations of\\ polynomial ODE systems
\thanks{This work has partly been supported by the French ANR-22-CE48-0008 OCCAM and ANR-22-CE48-0016 NODE projects. We would like to thank Andrey Bychkov, Marcelo Forets, Christian Schilling, Boris Kramer for helpful discussions and the referees for careful reading and detailed comments.}
}
\author{Yubo Cai\inst{1}\orcidID{0009-0005-9278-6315}
\and
Gleb Pogudin\inst{2}\orcidID{0000-0002-5731-8242}
}
\authorrunning{Y. Cai, G. Pogudin}

\institute{
\'Ecole Polytechnique, Institute Polytechnique de Paris, Palaiseau, France
\email{yubo.cai@polytechnique.edu} \and
LIX, CNRS, \'Ecole Polytechnique, Institute Polytechnique de Paris, Palaiseau, France \email{gleb.pogudin@polytechnique.edu}
}

\maketitle              
\begin{abstract}
Quadratization refers to a transformation of an arbitrary system of polynomial ordinary differential equations to a system with at most quadratic right-hand side.  
Such a transformation unveils new variables and model structures that facilitate model analysis, simulation, and control and offer a convenient parameterization for data-driven approaches. 
Quadratization techniques have found applications in diverse fields, including systems theory, fluid mechanics, chemical reaction modeling, and mathematical analysis.

In this study, we focus on quadratizations that preserve the stability properties of the original model, specifically dissipativity at given equilibria. 
This preservation is desirable in many applications of quadratization including reachability analysis and synthetic biology. 
We establish the existence of dissipativity-preserving quadratizations, develop an algorithm for their computation, and demonstrate it in several case studies.

\keywords{differential equations  \and quadratization \and stability \and variable transformation}
\end{abstract}
%
%
%


\section{Introduction}
Systems of ordinary differential equations (ODEs) are the standard choice when it comes to modeling processes happening in continuous time, for example, 
in the sciences and engineering. 
For a given dynamical process, one can derive different ODE models, in particular, by choosing different sets of variables.
It has been observed in a variety of areas and contexts that these choices may have a significant impact on the utility and relevance of the resulting model, and a number of different types of variable transformations have been studied.

In this paper, we will study one such transformation, \emph{quadratization}, which aims at transforming an ODE system to a system where the right-hand side consists of polynomials of degree at most two.
Let us illustrate this transformation on a toy example: we start with a scalar ODE $x' = x^3$ in a single variable $x = x(t)$ with cubic right-hand side.
If we now augment the state space with an additional coordinate $y = x^2$, we can write the original equation as $x' = xy$ with quadratic right-hand side, and we can do the same for $y'$:
\[
  y' = 2xx' = 2x^4 = 2y^2.
\]
So, the transformation in this case is the following:
\[
x' = x^3 \quad\to\quad \begin{cases}
    x' = xy,\\
    y' = 2y^2.
\end{cases}
\]
It turns out that every polynomial ODE system can be similarly lifted to an at most quadratic one: this fact has been established at least 100 years ago~\cite{Appelroth1902,Lagutinskii} and has been rediscovered several times since then~\cite{CPSW05,carravetta2015global,carravetta2020solution,gu2011qlmor,kerner1981universal}.
In the recent years quadratization has been used in a number of application areas including model order reduction~\cite{bennerBreiten2015twoSided,BOPK23,gu2011qlmor,KW18nonlinearMORliftingPOD,KW2019_balanced_truncation_lifted_QB}, synthetic biology~\cite{fages2017strong,Hemery2022,lifeware1}, numerical integration~\cite{GD19,guillot2019generic,GCV2019}, and reachability analysis~\cite{Forets_2021}.
While it has been shown in~\cite{lifeware1} that the problem of finding the minimal number of extra variables necessary for quadratization is NP-hard, at least two practically useful software packages have been developed for performing quadratization: {\sc BioCham}~\cite{lifeware1} and {\sc QBee}~\cite{bychkov2021optimal}.

In the majority of the applications mentioned above, the constructed quadratic ODE model is further used in the context of \emph{numerical} simulations. 
It is, therefore, a natural question whether one can not only guarantee that the transformed model is at most quadratic, but also that it preserves some desirable dynamical/numerical properties of the original ODE system.
To the best of our knowledge, this question has not been studied systematically, and in this paper, we initiate this line of research by studying \emph{dissipativity-preserving quadratizations}.

We will say that an ODE system is \emph{dissipative} at an equilibrium point if the real parts of the eigenvalues of the linearization of the system around this point are negative.
In particular, dissipativity implies that the system is \emph{asymptotically stable} at this point~\cite[Theorem 8.2.2]{Hubbard}.
The main contribution of the paper is two-fold. 
First, we prove that, for every polynomial system dissipative at several equilibrium points, there exists a quadratization which is also dissipative at all these points.
Second, we design and implement an algorithm to search automatically for such quadratization attempting to minimize the dimension.
Our algorithm is based on a combinatorial condition on the new variables which is sufficient to guarantee that the resulting quadratic model can be made dissipative as well.
This combinatorial condition can be viewed as a generalization and formalization of the artificial stabilization used in~\cite[Section~4.1]{KW2019_balanced_truncation_lifted_QB}.
We implemented the new algorithm and we illustrate it in several case studies including an application for reachability analysis (in combination with the algorithm from~\cite{Forets_2021}).
Our implementation together with the examples from this paper is available at~\cite{code}.

The rest of the paper is organized as follows.
In Section~\ref{sec:preliminaries}, we introduce the main notions, quadratization, and dissipativity, and show that quadratization performed straightforwardly may not preserve dissipativity (and, thus, render the model into a numerically unstable one).
Section~\ref{sec:theory} contains the statement and the proof of the main theoretical result of the paper (Theorem~\ref{thm:main}) that there always exists a dissipativity-preserving quadratization for any collection of dissipative equilibria.
Based on the ideas from the proof, we give an algorithm (Algorithm~\ref{alg:multi_dissipativity}) for constructing such a quadratization in Section~\ref{sec:alg}.
We showcase our implementation of this algorithm on several case studies in Section~\ref{sec:case_studies}.
Concluding remarks are contained in Section~\ref{sec:conclusions}.


\section{Preliminaries}
\label{sec:preliminaries}

Throughout this section, we will consider a polynomial ODE system, that is, a system of differential equations
\begin{equation}\label{eq:mainode}
\bx' = \bp(\bx),
\end{equation}
where $\bx = \bx(t) = (x_1(t), \ldots, x_n(t))$ is a vector of unknown functions and $\bp = (p_1, \ldots, p_n)$ is a vector of $n$-variate polynomials $p_1, \ldots, p_n \in \mathbb{R}[\bx]$.

\begin{definition}[Quadratization]\label{def:quadr}
    For a system~\eqref{eq:mainode}, \emph{quadratization} is a pair consisting of
    \begin{itemize}
        \item a list of new variables
        \[
          y_1 = g_1(\bx), \ldots, y_m = g_m(\bx)
        \]
        \item and two lists 
        \[
        \bq_1(\bx, \by) = (q_{1, 1}(\bx), \ldots, q_{1, n}(\by)) \quad\text{ and }\quad \bq_2(\bx, \by) = (q_{2, 1}(\bx, \by), \ldots, q_{2, m}(\bx, \by))
        \]
        of $m + n$-variate polynomials in $\bx$ and $\by = (y_1, \ldots, y_m)$
    \end{itemize}
    such that  the degree of each of of $\bq_1$ and $\bq_2$ is at most two and 
    \begin{equation}\label{eq:resulting_quadr}
      \bx' = \bq_1(\bx, \by) \quad \text{ and } \quad \by' = \bq_2(\bx, \by).
    \end{equation}
    If all the polynomials $g_1, \ldots, g_m$ are monomials, the quadratization is called \emph{monomial quadratization}.
\end{definition}

Note that unlike, for example, \cite[Definition~1]{bychkov2021optimal}, by \emph{quadratization} we mean not just the set of new variables but also the quadratic ODE system~\eqref{eq:resulting_quadr}.
The reason for this is that, for a fixed set of new variables, there may be many different systems of the shape (see Example~\ref{ex:quadratization}) prescribed by~\eqref{eq:resulting_quadr} exhibiting different numerical behaviors (see Example~\ref{ex:stability}).

\begin{example}[Quadratization] \label{ex:quadratization}
    Consider the following scalar ODE
    \[
    x' = -x + x^3.
    \]
    Here we have $n = 1$ and $p_1(x) = -x + x^3$.
    Consider $y = g_1(x) = x^2$. 
    Then we can write
    \begin{align*}
    x' &= -x + x^3 = -x + xy,\\
    y' &= 2xx' = -2x^2 + 2x^4 = -2y + 2y^2.
    \end{align*}
    Therefore, one possible quadratization is given by
    \[
     g_1(x) = x^2, \quad q_{1, 1}(x, y) = -x + xy, \quad q_{2, 1}(x, y) = -2y + 2y^2.
    \]
    As we have mentioned above, there may be different $\bq$'s corresponding to the same $\bg$.
    In this example, we could take, for example, $q_{2, 1} = -2y + 2y^2 + 2(y - x^2) = -2x^2 + 2y^2$ or  $q_{2, 1} = y - 3x^2 + 2y^2$.
    As we will see in Example~\ref{ex:stability}, such choices may have a dramatic impact on the numerical properties of the resulting ODE system.
\end{example}

\begin{definition}[Equilibrium]
    For a polynomial ODE system~\eqref{eq:mainode}, a point $\bx^\ast \in \mathbb{R}^n$ is called an \emph{equilibrium} if $\bp(\bx^\ast) = 0$.
\end{definition}

\begin{definition}[Dissipativity]
    An ODE system~\eqref{eq:mainode} is called \emph{dissipative} at an equilibrium point $\bx^\ast$ if all the eigenvalues of the Jacobian $J(\bp)|_{\bx = \bx^\ast}$ of $\bp$ and $\bx^\ast$ have negative real part.

    It is known that a system which is dissipative at an equilibrium point $\bx^\ast$ is~\emph{asymptotically stable} at $\bx^\ast$~\cite[Theorem 8.2.2]{Hubbard}, that is, any trajectory starting in a small enough neighborhood of $\bx^\ast$ will converge to $\bx^\ast$ exponentially fast.

    Note that if $\bx^\ast = 0$, then the Jacobian at this point is simply the matrix of the linear part of $\bp(\bx)$.
\end{definition}

Assume that $\bx^\ast \in \mathbb{R}^n$ is an equilibrium of $\bx' = \bp(\bx)$, and consider a quadratization of this system as in Definition~\ref{def:quadr}.
Then a direct computation shows that $(\bx^\ast, \bg(\bx^\ast))$ is an equilibrium point of the resulting quadratic system~\eqref{eq:resulting_quadr}.

\begin{definition}[Dissipative quadratization]
    Assume that a system~\eqref{eq:mainode} is dissipative at an equilibrium point $\bx^\ast \in \mathbb{R}^n$.
    Then a quadratization given by $\bg, \bq_1$ and $\bq_2$ (see Definition~\ref{def:quadr}) is called~\emph{dissipative at~$\bx^\ast$} if the system
    \[
      \bx' = \bq_1(\bx, \by), \quad \by' = \bq_2(\bx, \by)
    \]
    is dissipative at a point $(\bx^\ast, \bg(\bx^\ast))$.
\end{definition}

The following example shows that, even for the same new variables $\by = \bg(\bx)$, different quadratizations may have significantly different stability properties. 

\begin{example}[Stable and unstable quadratizations]\label{ex:stability} 
Consider the scalar ODE $x' = -x + x^3$ from Example \ref{ex:quadratization}. 
We have already found a quadratization for it using a new variable $y = g_1(x) = x^2$ with the resulting quadratic system being
\begin{equation}\label{eq:stable}
    x' = -x + xy\quad \text{ and } y' = -2y + 2y^2.
\end{equation}
We notice that we can add/subtract $y - x^2$ with any coefficients from the right-hand sides of the system. 
For example, we can obtain:
\begin{equation}\label{eq:unstable}
    x' = -x + xy\quad \text{and} \quad y' = -2y + 2y^2 + 12(y - x^2) = 10y - 12x^2 + 2y^2.
\end{equation}
Both systems above are quadratizations of the original system and, thus, mathematically, for any initial condition $(x_0, y_0)$ satisfying $y_0^2 = x_0^2$, they must follow the same trajectory.
However, \eqref{eq:stable} is stable at $(0, 0)$ while~\eqref{eq:unstable} is not.
By numerically integrating them, we can observe in Figure~\ref{fig:stability} that in practice \eqref{eq:stable} reflects the dynamics of the original equation accurately and~\eqref{eq:unstable} heavily suffers from numerical instability.

\begin{figure}[h!]
    \centering
    \includegraphics[width=1.0\textwidth]{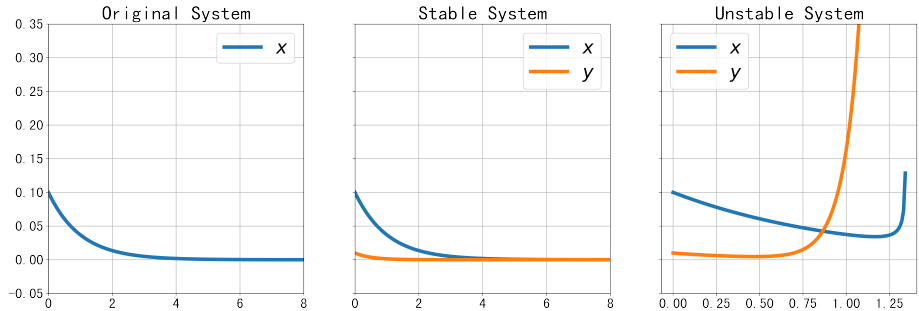}
    \caption{Plot of the original equation, \eqref{eq:stable}, and~\eqref{eq:unstable} with initial condition $\mathcal{X}_0 = [x_{0}, y_{0}=x_{0}^{2}]=[0.1, 0.01]$. Numerical method: ``LSODA'' (uses hybrid Adams/BDF method with automatic stiffness detection) in scipy.integrate.solve\_ivp package \cite{hindmarsh1983odepack,petzold1983automatic}.}
    \label{fig:stability}
\end{figure}
\end{example}


\section{Existence of dissipativity-preserving quadratizations}\label{sec:theory}

The main result of this section is the following theorem.
Its proof is constructive and is used to design an algorithm in Section~\ref{sec:alg}.

\begin{theorem}\label{thm:main}
    For every polynomial ODE system $\bx' = \bp(\bx)$, there exists a quadratization that is dissipative at all the dissipative equilibria of $\bx' = \bp(\bx)$.
\end{theorem}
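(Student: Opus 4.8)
The plan is to exploit the freedom, illustrated in Example~\ref{ex:stability}, of adding to the right-hand sides constant multiples of relations that vanish on the manifold $\mathcal{M} = \{(\bx,\by) : \by = \bg(\bx)\}$, and to use such additions as a \emph{high-gain} stabilization of the directions transverse to $\mathcal{M}$. First I would reduce to a convenient normal form for the new variables. Starting from any monomial quadratization (which exists by the classical results cited in the introduction) and closing the set of monomials under the factorizations $m = x_i\cdot (m/x_i)$, re-quadratizing the finitely many monomials this introduces, one obtains a \emph{triangular} monomial quadratization: the new variables can be ordered $\bg = (g_1,\dots,g_m)$ so that each $g_j$ equals a product $g_j = z_a z_b$ of two strictly earlier variables, where $z_1,\dots,z_n$ denote $x_1,\dots,x_n$ and $z_{n+j}$ denotes $y_j$. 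For each new variable this yields a degree-two relation $R_j := y_j - z_a z_b$ vanishing identically on $\mathcal{M}$.

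Next I would record the modification freedom: replacing $q_{2,k}$ by $q_{2,k} + \sum_j \Lambda_{kj} R_j$ for constants $\Lambda_{kj}$ keeps the degree at most two and leaves the dynamics on $\mathcal{M}$ unchanged, hence produces another quadratization with the same manifold. I would then observe that the dissipative equilibria are finite in number: a dissipative equilibrium has invertible Jacobian, hence is an isolated zero of $\bp$, and an algebraic variety has only finitely many isolated points. Fix such an equilibrium $\bx^\ast$, put $P = (\bx^\ast, \bg(\bx^\ast))$, and write the Jacobian of the unmodified quadratic system at $P$ in block form $\begin{pmatrix} A & B \\ C & D\end{pmatrix}$ with respect to the splitting $(\bx,\by)$. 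Differentiating the identities $\bp(\bx) = \bq_1(\bx,\bg(\bx))$ and $R(\bx,\bg(\bx)) \equiv 0$ produces the two facts that drive the argument: the original Jacobian $A_0 := J(\bp)|_{\bx^\ast}$ equals $A + BG$, where $G = J(\bg)|_{\bx^\ast}$ is the slope of $\mathcal{M}$; and $G = -N_y^{-1}N_x$, where $[\,N_x \mid N_y\,] := dR|_P$ and $N_y = \partial R/\partial \by|_P$ is invertible. Because of the triangular construction, $N_y$ is lower unitriangular, so all its eigenvalues equal $1$, uniformly over the chosen equilibria.

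Finally I would carry out the stabilization by taking $\Lambda = -\mu I$, i.e. adding $-\mu R_k$ to the $k$-th new equation, and studying the eigenvalues of the resulting Jacobian $J(\mu) = \begin{pmatrix} A & B \\ C - \mu N_x & D - \mu N_y\end{pmatrix}$ as $\mu \to \infty$. A Schur-complement computation shows that $m$ eigenvalues diverge to $-\infty$ (to leading order they are $-\mu$ times the eigenvalues of $N_y$, all of which are $1$), while the remaining $n$ eigenvalues converge to the spectrum of $A - BN_y^{-1}N_x = A_0$, which lies in the open left half-plane by dissipativity. Hence $J(\mu)$ is Hurwitz for all sufficiently large $\mu$; choosing $\mu$ large enough to work for each of the finitely many dissipative equilibria simultaneously yields the desired quadratization.

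I expect the main obstacle to be this last step: controlling the two families of eigenvalues \emph{uniformly}. The singular-perturbation estimate is what forces the diverging eigenvalues to behave like $-\mu\cdot\mathrm{eig}(N_y)$ and the persistent ones to relax to $\mathrm{eig}(A_0)$, and it is precisely the triangular normal form that guarantees $\mathrm{eig}(N_y)$ has positive real part at every equilibrium at once, so that a single scalar $\mu$ stabilizes all of them. Getting the triangular quadratization to exist (and remain quadratic after closing under divisors) and making the perturbation analysis rigorous across all dissipative equilibria are the two places where care is needed.
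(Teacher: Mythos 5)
Your proposal is correct, and its architecture is essentially that of the paper: your ``triangular monomial quadratization'' is the paper's inner-quadratic quadratization (Definition~\ref{def:inner_quadr}; Proposition~\ref{prop:inner_exists} sidesteps your closure-under-divisors worry by taking the full set of monomials $x_1^{i_1}\cdots x_n^{i_n}$ with $i_j \leqslant d_j$ and total degree at least two, which is already divisor-closed), your relations $R_j$ are the stabilizers of Definition~\ref{def:stabilizer}, your gain $\Lambda = -\mu I$ is the paper's modification $\bq_{2,\lambda} = \bq_2 - \lambda \bh$ in~\eqref{eq:q2lambda}, and your finiteness-plus-uniform-gain conclusion is Corollary~\ref{cor:gauging}. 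The one genuine divergence is the eigenvalue analysis, where you invoke a Schur-complement singular-perturbation limit in the original coordinates, flagging its uniformity as the main obstacle; the paper's route is more elementary and makes that obstacle disappear. It applies the polynomial change of variables $z_j = y_j - g_j(\bx)$; since the graph $\{\by = \bg(\bx)\}$ is invariant and carries the original dynamics, the Jacobian at each equilibrium becomes \emph{exactly} block triangular, see~\eqref{eq:Jac_lambda}, with diagonal blocks $J(\bp)$ and $J_{\bz}(\widetilde{\bq}_2) - \lambda J_{\bz}(\bh)$. Hence the ``slow'' eigenvalues equal $\operatorname{eig}(A_0)$ for \emph{every} $\lambda$, not merely in the limit, and the only asymptotics needed is the elementary Lemma~\ref{lem:add_triang} on $A - \lambda B$ with $B$ unitriangular, proved by a direct characteristic-polynomial bound --- which also handles the Jordan-block splitting of the eigenvalue $1$ of $N_y$ that the heuristic ``$-\mu\cdot\operatorname{eig}(N_y)$ to leading order'' glosses over. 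You can recover this exactness inside your own setup: your identities $N_x + N_y G = 0$ and $A_0 = A + BG$, together with the invariance identity $C + DG = GA_0$ (obtained by differentiating $\bq_2(\bx,\bg(\bx)) - J(\bg)\,\bq_1(\bx,\bg(\bx)) \equiv 0$ at the equilibrium, where $\bq_1$ vanishes), show that conjugating your $J(\mu)$ by $T = \left(\begin{smallmatrix} I & 0 \\ -G & I \end{smallmatrix}\right)$ annihilates the lower-left block together with all of its $\mu$-dependence, leaving $\left(\begin{smallmatrix} A_0 & B \\ 0 & D - GB - \mu N_y \end{smallmatrix}\right)$; so no two-time-scale estimate is required, and a single $\mu$ exceeding the finitely many thresholds from Lemma~\ref{lem:add_triang} finishes the proof exactly as you intended. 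What your version would buy, if rigorized via standard singular-perturbation theory, is nothing beyond the paper's conclusion, at the cost of heavier machinery; what the paper's version buys is an exact spectral splitting and a self-contained lemma.
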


\begin{remark}
    In fact, the key ingredients of the proof, Propositions~\ref{prop:inner_exists} and~\ref{prop:gauging}, imply a stronger statement: for every set of finitely many equilibria, there is a quadratization such that the number of nonnegative eigenvalues of the Jacobian of the quadratic system at these points is the same as for the original system.
\end{remark}

The rest of the section will be devoted to proving Theorem~\ref{thm:main}. 
The main technical notion will be an \emph{inner-quadratic} set of polynomials.
\begin{definition}[Inner-quadratic set]\label{def:inner_quadr}
As finite set $g_1(\bx), \ldots, g_m(\bx) \in \mathbb{R}[\bx]$ of nonconstant polynomials in $\bx = (x_1, \ldots, x_n)$ is called \emph{inner-quadratic} if, for every $1 \leqslant i \leqslant m$, there exist (not necessarily distinct) $a, b \in \{x_1, \ldots, x_n, g_1, \ldots, g_m\}$ such that $g_i = ab$.

A quadratization will be called~\emph{inner-quadratic} if the set of new variables $\bg$ is inner-quadratic.
We will also always assume that the new variables are sorted by degree, that is, $\deg g_1 \leqslant \deg g_2 \leqslant \ldots \leqslant \deg g_m$.
\end{definition}

The rationale behind the notion of inner-quadratic quadratization is that at most quadratic relations between the new variables give us the flexibility to ``tune'' the right-hand side of the resulting quadratic system in the same fashion as we added a multiple of $y - x^2$ in~Example~\ref{ex:stability}.
These additional terms force the trajectory to stay on the image of the map $\bx \to (\bx, \by)$, on which the properties of the original dynamics (such as dissipativity) are preserved.
The following definition formalizes this observation.

\begin{definition}[Stabilizers]\label{def:stabilizer}
    Consider a polynomial ODE system $\bx' = \bp(\bx)$ and its inner-quadratic quadratization given by $m$ new variables $\by = \bg(\bx)$ and right-hand side $\bq_1(\bx, \by), \bq_2(\bx, \by)$ of the resulting quadratic system as in Definition~\ref{def:quadr}.
    For every $1 \leqslant i \leqslant m$, by the definition of inner-quadratic set, there exist $a_i, b_i \in \{\bx, y_1, \ldots, y_{i - 1}\}$ such that the equality $y_i = a_ib_i$ holds if we replace each $y_j$ with $g_j(\bx)$.
    We define \emph{the $i$-th stabilizer} by $h_i(\bx, \by) := y_i - a_ib_i$.

    Since each stabilizer is at most quadratic and vanishes under the substitution $\by = \bg(\bx)$, adding any stabilizer to any of $\bq_1, \bq_2$ still yields a quadratization of $\bx' = \bp(\bx)$.
\end{definition}

\begin{example}[Stabilizers]\label{eg:stabilizer_example}
   Let us give an example of the stabilizers. Consider a system:
    $$
    x_1^{\prime}=-3 x_1+x_2^4 , \quad 
    x_2^{\prime}=-2 x_2+x_1^2.
    $$
    By applying \textbf{Algorithm \ref{alg:inner-quadratization}}, we introduce the following new variables to obtain an inner-quadratic quadratization:
    $$
    y_1=x_1^2, \quad
    y_2=x_2^2, \quad
    y_3=x_1 x_2, \quad
    y_4=x_2^3 = x_{2}y_{2}.
    $$
    Then the corresponding stabilizers, according to the definition above, will be:
    \begin{align*}
    h_{1}(\bx, \by) &=y_1-x_1^2, & h_{2}(\bx, \by) &=y_2-x_2^2,\\
    h_{3}(\bx, \by) &=y_3-x_1 x_2, &
    h_{4}(\bx, \by) &= y_4-x_{2}y_{2}
    \end{align*}
\end{example}

Theorem~\ref{thm:main} follows directly from the following two properties of inner-quadratic quadratizations:
\begin{itemize}
    \item every polynomial ODE system has an inner-quadratic quadratization (Proposition~\ref{prop:inner_exists});
    \item for any inner-quadratic quadratization, one can modify the right-hand sides of the quadratic system (but not the new variables) using the stabilizers in order to obtain a dissipativity-preserving quadratization (Proposition~\ref{prop:gauging}).
\end{itemize}


\begin{proposition}\label{prop:inner_exists}
    Every polynomial ODE system $\bx' = \bp(\bx)$ admits an inner-quadratic quadratization.
    Furthermore, it can be chosen to be a monomial quadratization.
\end{proposition}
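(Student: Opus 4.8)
The plan is to exhibit an explicit monomial quadratization by taking \emph{all} monomials up to a suitable degree as the new variables; this single construction handles both the quadratization requirement and the inner-quadratic requirement at once. Write $D_p := \max_{1 \leqslant i \leqslant n} \deg p_i$. If $D_p \leqslant 2$, the system $\bx' = \bp(\bx)$ is already at most quadratic and the empty set of new variables is a (vacuously inner-quadratic) monomial quadratization, so I may assume $D_p \geqslant 3$. I then set $D := D_p - 1$ and let the new variables $g_1, \ldots, g_m$ be all monomials in $x_1, \ldots, x_n$ of degree between $2$ and $D$, sorted by degree. Write $V := \{1, x_1, \ldots, x_n, g_1, \ldots, g_m\}$ for the set of all monomials of degree at most $D$.

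First I would verify the inner-quadratic property. Each $g_i$ is a monomial of degree $d \in \{2, \ldots, D\}$. If $d = 2$, then $g_i = x_j x_k$ is a product of two original variables. If $d \geqslant 3$, I pick any variable $x_j$ dividing $g_i$ and write $g_i = x_j \cdot (g_i / x_j)$; the cofactor $g_i / x_j$ is a monomial of degree $d - 1 \in \{2, \ldots, D-1\}$, hence it lies in the set and, having strictly smaller degree, precedes $g_i$ in the degree ordering. In both cases $g_i = a_i b_i$ with $a_i, b_i \in \{x_1, \ldots, x_n, g_1, \ldots, g_{i-1}\}$, which is exactly what Definition~\ref{def:inner_quadr} demands.

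The core of the argument is to check that these variables actually form a quadratization, and the key point is the following elementary observation: since $V$ consists of \emph{all} monomials of degree at most $D$, any monomial $m$ of degree $\delta \leqslant 2D$ can be written as a product of two elements of $V$, simply by splitting its exponent vector into two parts of degrees $\lceil \delta/2 \rceil \leqslant D$ and $\lfloor \delta/2 \rfloor \leqslant D$. It therefore suffices to show that every monomial occurring on the right-hand side of the resulting system has degree at most $2D$. For the original equations $\bx' = \bp(\bx)$ this is clear, since $\deg p_i \leqslant D_p \leqslant 2D$. For each new variable $g_j$ of degree $d_j$, the chain rule gives $g_j' = \sum_{k=1}^n \frac{\partial g_j}{\partial x_k}\, p_k$, whose degree is at most $(d_j - 1) + D_p \leqslant (D - 1) + D_p = 2D$ by the choice $D = D_p - 1$. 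Rewriting each monomial of every $p_i$ and of every $g_j'$ as a product of two elements of $V$, and substituting $y_\alpha$ for the corresponding $g_\alpha$, produces polynomials $\bq_1, \bq_2$ of degree at most two satisfying~\eqref{eq:resulting_quadr}; this is the desired monomial quadratization.

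The only delicate point, and the step I expect to be the real (if modest) obstacle, is the degree bookkeeping: one must confirm that the \emph{derivatives} $g_j'$ of the new variables — not merely the original $p_i$ — remain within degree $2D$, and it is precisely this requirement that forces the choice $D = D_p - 1$, so that $(D-1) + D_p = 2D$. Everything else is immediate from taking the full set of monomials of bounded degree, which removes any closure issue in the factorization step since the factors of any degree-$\leqslant 2D$ monomial automatically land in $V$.
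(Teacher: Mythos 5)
Your proof is correct, but it takes a genuinely different route from the paper. The paper does not construct a quadratization from scratch: it takes the classical monomial quadratization of Carothers et al.\ (all monomials $x_1^{i_1}\cdots x_n^{i_n}$ with \emph{per-variable} degree bounds $0 \leqslant i_j \leqslant d_j$, where $d_j = \max_k \deg_{x_j} p_k$, and total degree $>1$), cites that reference for the fact that this set is a quadratization, and only verifies the inner-quadratic property by the same division trick you use ($g = (g/x_j)\cdot x_j$, with the cofactor of strictly smaller degree). You instead take all monomials of \emph{total} degree between $2$ and $D = D_p - 1$, and you must then prove the quadratization property yourself; your degree bookkeeping does this correctly: every right-hand-side monomial has degree at most $2D$ (for the original equations because $D_p = D+1 \leqslant 2D$ once $D \geqslant 2$, and for the new variables because $\deg g_j' \leqslant (d_j - 1) + D_p \leqslant 2D$), and since $V$ contains \emph{all} monomials of degree at most $D$, any exponent vector of degree $\leqslant 2D$ splits into two factors of degrees $\lceil \delta/2\rceil, \lfloor \delta/2 \rfloor \leqslant D$, both automatically in $V$. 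You also handle a detail the paper glosses over, namely that the factors in the inner-quadratic decomposition precede $g_i$ in the degree ordering, as required for the stabilizers of Definition~\ref{def:stabilizer} to be well-defined. What each approach buys: the paper's proof is shorter because it delegates the quadratization claim to a known theorem, and its division-closed set meshes directly with the existing literature; your construction is self-contained (no external citation needed), handles the degenerate case $D_p \leqslant 2$ explicitly, and in some cases (e.g.\ the univariate $x' = x^5$) produces a strictly smaller variable set, though neither set dominates the other in general.
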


\begin{proof}
    We will show that the quadratization which is typically used to prove the existence of a quadratization for every polynomial ODE system (see, e.g.~\cite[Theorem~1]{Carothers2005}) is in fact inner-quadratic.
    For every $1 \leqslant i \leqslant n$, we introduce $d_i = \max\limits_{1 \leqslant j \leqslant n} \deg_{x_i} p_j$.
    Then it is proven in~\cite[Theorem~1]{Carothers2005}  that the following set of new variables yields a quadratization of $\bx' = \bp(\bx)$:
    \[
    \mathcal{M} = \{x_1^{i_1}\ldots x_n^{i_n} \mid \forall j\colon 0 \leqslant i_j \leqslant d_j, \; \sum i_j > 1\}.
    \]
    Let $g \in \mathcal{M}$. Then there exists $1 \leqslant j \leqslant n$ such that $\deg_{x_j} g > 0$. 
    Then we can write $g = (g / x_j) \cdot x_j$, where $g / x_j$ is either in $\mathcal{M}$ or belongs to $\{x_1, \ldots, x_n\}$.
    Thus, $\mathcal{M}$ is an inner-quadratic set.
\end{proof}


\begin{proposition}\label{prop:gauging}
    Consider a system $\bx' = \bp(\bx)$ and its inner-quadratic quadratization defined by new variables $\bg(\bx)$ and the new right-hand side $\bq_1, \bq_2$ as in Definition~\ref{def:quadr}.
    Let $\bx_1^*, \ldots, \bx_\ell^*$ be a finite subset of the equilibria of the system.
    Then there exist vectors of quadratic polynomials $\br_1(\bx, \by), \br_2(\bx, \by)$ such that $\bg, \br_1, \br_2$ define a quadratization for which the eigenvalues of the Jacobian at each equilibrium point of the form $(\bx_i^\ast, \bg(\bx_i^*))$ are the union of the eigenvalues of $J(\bp)|_{\bx = \bx_i^*}$ and a set of numbers with negative real part.
\end{proposition}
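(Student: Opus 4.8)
The plan is to pass to coordinates in which the invariant manifold $\by = \bg(\bx)$ becomes a coordinate subspace, read the eigenvalues off a block-triangular Jacobian, and then use the stabilizers to push only the ``transverse'' block into the open left half-plane. Concretely, I would introduce $\bz := \by - \bg(\bx)$, so that each equilibrium $(\bx_i^\ast, \bg(\bx_i^\ast))$ becomes $(\bx_i^\ast, 0)$ and the manifold is $\{\bz = 0\}$. Since the map $(\bx, \by) \mapsto (\bx, \bz)$ is a global diffeomorphism with unipotent derivative $\left(\begin{smallmatrix} I & 0 \\ -J(\bg) & I \end{smallmatrix}\right)$, it conjugates the Jacobian of the vector field at an equilibrium to a similar matrix, so the spectrum is unchanged and it suffices to work in the $(\bx, \bz)$ coordinates.

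The key structural fact I would establish is that, because $\bg, \bq_1, \bq_2$ form a quadratization, the manifold $\{\bz = 0\}$ is invariant: substituting $\by = \bg(\bx)$ yields $\bq_1(\bx, \bg(\bx)) = \bp(\bx)$ and $\nabla g_i \cdot \bp = q_{2,i}(\bx, \bg(\bx))$, whence $\bz'|_{\bz = 0} \equiv 0$ identically in $\bx$. Differentiating this identity in $\bx$ shows that the lower-left block $\partial \bz'/\partial \bx$ of the Jacobian vanishes at the equilibrium, while the chain rule identifies the upper-left block $\partial \bx'/\partial \bx$ with $J(\bp)|_{\bx = \bx_i^\ast}$. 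Hence the Jacobian is block upper-triangular and its spectrum is the disjoint union of $\mathrm{eig}(J(\bp)|_{\bx_i^\ast})$ and the spectrum of the transverse block $B_i := \partial \bz'/\partial \bz|_{(\bx_i^\ast, 0)}$. This reduces the proposition to making every $B_i$ Hurwitz; note that modifying $\bq_2$ by stabilizers preserves the quadratization property, hence preserves both this block structure and the upper-left block $J(\bp)$.

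Next I would compute the action of the stabilizers on the transverse block. Rewriting each $h_j = y_j - a_j b_j$ in the $(\bx, \bz)$ coordinates and using that $g_j$ equals $a_j b_j$ on the manifold, one finds in every case $h_j = z_j - (\text{linear combination of } z_k \text{ with } k < j) + O(\bz^2)$, precisely because the inner-quadratic factors $a_j, b_j$ have strictly smaller index than $j$. Consequently the matrix $L_i$ with entries $\partial h_j/\partial z_k|_{(\bx_i^\ast,0)}$ is lower-triangular with unit diagonal. Taking $\br_1 := \bq_1$ and $\br_2 := \bq_2 - \mu\,\bh$ (i.e.\ subtracting $\mu h_i$ from the $i$-th component of $\bq_2$ for a scalar $\mu > 0$), which keeps every right-hand side of degree at most two and vanishes on the manifold, the transverse block becomes $B_i = B_i^{(0)} - \mu L_i$, where $B_i^{(0)}$ is the transverse block of the unmodified system.

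The hard part, and the one requiring genuine care, is that $\mu$ is a single constant shared by all $\ell$ equilibria, so I cannot tune $B_i$ independently at each point. The feature that rescues the argument is that $L_i$ is unit lower-triangular for every $i$, hence all its eigenvalues equal $1$ regardless of the equilibrium. Since $\tfrac{1}{\mu}B_i = -L_i + \tfrac{1}{\mu}B_i^{(0)} \to -L_i$ as $\mu \to \infty$ and eigenvalues depend continuously on matrix entries, the eigenvalues of $\tfrac{1}{\mu}B_i$ converge to $-1$, so every eigenvalue of $B_i$ has real part near $-\mu$ once $\mu$ is large; moreover all $m$ transverse eigenvalues are negative-real-part numbers, matching the statement. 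As there are only finitely many equilibria, choosing $\mu$ larger than all the finitely many thresholds makes every $B_i$ Hurwitz at once, which completes the proof. In short, the uniform-in-$i$ choice of $\mu$ is the main obstacle, and it is overcome by the index ordering of the inner-quadratic variables, which forces the leading matrix $L_i$ to have the fixed unit spectrum at all equilibria.
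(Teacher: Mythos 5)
Your proposal is correct and follows essentially the same route as the paper's proof: the same change of coordinates $\bz = \by - \bg(\bx)$, the same block upper-triangular Jacobian obtained from invariance of $\{\bz = 0\}$, and the same modification $\bq_2 - \lambda\bh$ exploiting that the index ordering of the inner-quadratic variables makes the stabilizer block $J_{\bz}(\bh)$ unit triangular at every equilibrium. The only (harmless) deviation is at the final step, where the paper proves a quantitative lemma via characteristic-polynomial estimates (yielding an explicit threshold $\lambda_0$), whereas you invoke continuity of eigenvalues to conclude that $\tfrac{1}{\mu}B_i \to -L_i$ forces all transverse eigenvalues into the left half-plane for large $\mu$.
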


\begin{corollary}\label{cor:gauging}
    Consider a system $\bx' = \bp(\bx)$ and its inner-quadratic quadratization defined by new variables $\bg(\bx)$ and the new right-hand side $\bq_1, \bq_2$ as in Definition~\ref{def:quadr}.
    Then there exist vectors of quadratic polynomials $\br_1(\bx, \by), \br_2(\bx, \by)$ such that $\bg, \br_1, \br_2$ define a quadratization which is dissipative at every dissipative equilibrium of $\bx'=\bp(\bx)$.
\end{corollary}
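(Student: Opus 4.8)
Since the system is dissipative at each $\bx_i^\ast$ precisely when every eigenvalue of $J(\bp)|_{\bx=\bx_i^\ast}$ has negative real part, Corollary~\ref{cor:gauging} is immediate from Proposition~\ref{prop:gauging}: applying the proposition to the (finitely many) dissipative equilibria, the Jacobian of the resulting quadratic system at each $(\bx_i^\ast,\bg(\bx_i^\ast))$ has spectrum equal to the union of the already-negative eigenvalues of $J(\bp)|_{\bx=\bx_i^\ast}$ with a further set of numbers of negative real part, hence all of its eigenvalues have negative real part. So the real work is to prove Proposition~\ref{prop:gauging}, which I outline below.

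The starting point is that the graph $\mathcal{V}=\{(\bx,\by)\mid \by=\bg(\bx)\}$ is invariant under every quadratization of $\bx'=\bp(\bx)$: on $\mathcal{V}$ the quadratic system reproduces $\bx'=\bp(\bx)$ and $\by'=J(\bg)\bp(\bx)$. I would change coordinates to $\bz=\by-\bg(\bx)$, so that $\mathcal{V}=\{\bz=0\}$ is an invariant set containing every $(\bx_i^\ast,\bg(\bx_i^\ast))$. Because $\bz=0$ is invariant, $\bz'$ vanishes identically there, so $\partial\bz'/\partial\bx$ vanishes along it and the Jacobian at $(\bx_i^\ast,0)$ becomes block upper-triangular with diagonal blocks $J(\bp)|_{\bx=\bx_i^\ast}$ (the tangential dynamics, fixed once and for all) and a transverse block
\[
B^{(i)}=\Bigl(\tfrac{\partial \br_2}{\partial \by}-J(\bg)\tfrac{\partial \br_1}{\partial \by}\Bigr)\Big|_{(\bx_i^\ast,\bg(\bx_i^\ast))}.
\]
The coordinate change $(\bx,\by)\mapsto(\bx,\by-\bg(\bx))$ is a global diffeomorphism whose differential $\left(\begin{smallmatrix}I&0\\-J(\bg)&I\end{smallmatrix}\right)$ is invertible, so it preserves the spectrum; thus the eigenvalues of the quadratic system split as those of $J(\bp)|_{\bx=\bx_i^\ast}$ together with those of $B^{(i)}$, and the entire problem reduces to choosing the gauge so that each $B^{(i)}$ is stable.

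The gauge I would use is $\br_1=\bq_1$ and $\br_2=\bq_2-c\,\bh$, where $\bh=(h_1,\ldots,h_m)$ is the vector of stabilizers and $c>0$ is a large constant; since each $h_i$ is quadratic and vanishes on $\mathcal{V}$, this is still a quadratization. Then $B^{(i)}=B_0^{(i)}-c\,T_i$, where $B_0^{(i)}$ denotes the same transverse block computed with the ungauged $\bq_1,\bq_2$ and $T_i=\partial\bh/\partial\by|_{(\bx_i^\ast,\bg(\bx_i^\ast))}$. The crucial structural point, which is where the inner-quadratic hypothesis and the ordering $\deg g_1\leqslant\cdots\leqslant\deg g_m$ enter, is that $h_i=y_i-a_ib_i$ with $a_i,b_i\in\{\bx,y_1,\ldots,y_{i-1}\}$, so $\partial h_i/\partial y_k=0$ for $k>i$ and $\partial h_i/\partial y_i=1$; hence every $T_i$ is unit lower-triangular and has all its eigenvalues equal to $1$. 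Writing $B^{(i)}=-c\bigl(T_i-c^{-1}B_0^{(i)}\bigr)$ and using continuity of the spectrum, as $c\to\infty$ the eigenvalues of $T_i-c^{-1}B_0^{(i)}$ cluster at $1$, so those of $B^{(i)}$ have real part close to $-c$. As there are finitely many equilibria, a single $c$ makes all the $B^{(i)}$ stable at once.

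The step I expect to be the main obstacle is exactly this simultaneity: the gauge coefficients must be constants, since multiplying a stabilizer by anything nonconstant would break the degree bound, so one fixed damping must stabilize the $\ell$ distinct, equilibrium-dependent transverse blocks $B^{(i)}=B_0^{(i)}-c\,T_i$, and in general $T_i$ varies with $i$. What rescues the argument — and is the content to verify most carefully — is that the inner-quadratic, sorted-by-degree structure forces each $T_i$ to be unitriangular with spectrum $\{1\}$ irrespective of the equilibrium, so that a uniform large damping $-c\,\bh$ dominates all the $B_0^{(i)}$ simultaneously. The remaining checks are routine: that the coordinate change preserves eigenvalues, that $\partial\bz'/\partial\bx$ vanishes on the invariant set, and that adding $-c\,\bh$ keeps the system a quadratization.
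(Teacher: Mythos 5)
Your proposal is correct and takes essentially the same route as the paper: the corollary is reduced to Proposition~\ref{prop:gauging} via finiteness of the dissipative equilibria (the paper justifies this with a one-line remark that such equilibria are isolated roots, which you assert parenthetically), and your outline of the proposition itself matches the paper's proof --- the change of coordinates $\bz = \by - \bg(\bx)$, the block-triangular Jacobian on the invariant graph, the gauge $\bq_2 - \lambda\bh$, and the key observation that the inner-quadratic structure with degree ordering makes $\partial\bh/\partial\by$ unitriangular with spectrum $\{1\}$ uniformly over the equilibria, so one large $\lambda$ stabilizes all transverse blocks at once. The only cosmetic differences are that you justify the large-$\lambda$ limit by continuity of the spectrum where the paper proves the explicit quantitative Lemma~\ref{lem:add_triang}, and that you correctly call the matrix lower-triangular under the standard Jacobian convention where the paper writes upper-triangular (an inconsequential transposition).
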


\begin{proof}[Proof of Corollary~\ref{cor:gauging}]
    Since each dissipative equilibrium of the system is an isolated root of the polynomial system obtained by equating the right-hand side to zero, there are only finitely many of them.
    So we apply Proposition~\ref{prop:gauging} to this finite set of equilibria and obtain the desired quadratization.
\end{proof}

Before proving Proposition~\ref{prop:gauging}, we establish a useful linear-algebraic lemma.

\begin{lemma}\label{lem:add_triang}
    Let $A \in \mathbb{R}^{n\times n}$ be a square matrix and $B \in \mathbb{R}^{n\times n}$ be an upper triangular matrix with ones on the diagonal.
    Then there exists $\lambda_0 \in \mathbb{R}$ such that, for every $\lambda > \lambda_0$, the real parts of all the eigenvalues of $A - \lambda B$ are negative.
\end{lemma}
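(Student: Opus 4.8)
The plan is to analyze the eigenvalues of $A - \lambda B$ in the regime $\lambda \to +\infty$ by rescaling. Since $B$ is upper triangular with ones on the diagonal, it is invertible and all of its eigenvalues equal $1$; in particular, the dominant term $-\lambda B$ has every eigenvalue equal to $-\lambda$. This suggests that the eigenvalues of $A - \lambda B$ should behave like $-\lambda$ for large $\lambda$, and the goal is to make this heuristic precise.

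First I would set up the rescaling. A number $\mu$ is an eigenvalue of $A - \lambda B$ if and only if $\det(A - \lambda B - \mu I) = 0$. Writing $\mu = \lambda \nu$ and factoring out $\lambda^n$ (legitimate for $\lambda \neq 0$), this is equivalent to
\[
  p_\lambda(\nu) := \det\!\left(\tfrac{1}{\lambda} A - B - \nu I\right) = 0.
\]
Thus the eigenvalues of $A - \lambda B$ are exactly the numbers $\lambda \nu_1(\lambda), \ldots, \lambda \nu_n(\lambda)$, where $\nu_1(\lambda), \ldots, \nu_n(\lambda)$ are the roots of $p_\lambda$. Note that $p_\lambda$, viewed as a polynomial in $\nu$, has constant leading coefficient $(-1)^n$, so it never drops in degree.

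Next I would pass to the limit. The coefficients of $p_\lambda$ are polynomials in $1/\lambda$, so as $\lambda \to +\infty$ they converge to the coefficients of
\[
  p_\infty(\nu) = \det(-B - \nu I) = (-1)^n (1 + \nu)^n,
\]
where the last equality uses that $B + \nu I$ is upper triangular with diagonal entries $1 + \nu$. The polynomial $p_\infty$ has the single root $\nu = -1$ of multiplicity $n$. Invoking the continuity of the roots of a polynomial with respect to its coefficients (with the degree pinned down by the nonvanishing leading coefficient), I conclude that for every $\varepsilon > 0$ there is $\lambda_0$ such that for all $\lambda > \lambda_0$ every root $\nu_i(\lambda)$ lies within $\varepsilon$ of $-1$. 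Taking $\varepsilon < 1$ and enlarging $\lambda_0$ so that $\lambda_0 > 0$, we get $\operatorname{Re} \nu_i(\lambda) < -1 + \varepsilon < 0$ for $\lambda > \lambda_0$; since the eigenvalues of $A - \lambda B$ are $\lambda \nu_i(\lambda)$ with $\lambda > 0$, each has real part $\lambda \operatorname{Re}\nu_i(\lambda) < 0$, as required.

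The main obstacle is the step invoking continuity of roots: one must be careful that this is a statement about the \emph{unordered} multiset of roots varying continuously, since the $\nu_i(\lambda)$ may be complex and cannot in general be chosen as continuous single-valued functions near the multiple root $\nu = -1$. This is nonetheless a standard fact (provable via Rouch\'e's or Hurwitz's theorem), and all that is needed here is the weaker conclusion that every root stays in a small disc around $-1$, which the statement directly provides.
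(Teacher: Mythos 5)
Your proof is correct, and its skeleton coincides with the paper's: both rescale the spectral variable by $\lambda$ (your $\mu = \lambda\nu$ is the paper's $T = t/\lambda$), both use that $B + \nu I$ is upper triangular with diagonal entries $1+\nu$ to identify the limiting polynomial $(-1)^n(1+\nu)^n$, and both finish by observing that roots of the rescaled polynomial clustered near $-1$ pull back, for $\lambda > 0$, to eigenvalues $\lambda\nu$ with negative real part. Where the two arguments genuinely part ways is in how the clustering around $-1$ is established. You invoke the continuity of the unordered multiset of roots of a polynomial in its coefficients (with the degree pinned by the constant leading coefficient $(-1)^n$), correctly flagging this as the delicate point and noting it follows from Rouch\'e's or Hurwitz's theorem; this is a clean soft argument, but it yields only the \emph{existence} of $\lambda_0$. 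The paper instead makes this step quantitative and self-contained: it writes the rescaled characteristic polynomial as $Q(T) = (T+1)^n + \frac{1}{\lambda}\,p\bigl(T+1, \frac{1}{\lambda}\bigr)$ with $p$ of total degree at most $n-1$, bounds the perturbation by $Cn^2\max\bigl(|T+1|^{n-1}, 1\bigr)$ where $C$ dominates the coefficients of $p$, and deduces $|T_0 + 1| < 1$ for every root $T_0$ as soon as $\lambda > \max(1, Cn^2)$ --- thereby producing an explicit admissible threshold $\lambda_0 = \max(1, Cn^2)$ without appealing to any root-continuity theorem. Your route buys brevity at the price of citing a standard but nontrivial complex-analytic fact; the paper's route buys an effective bound showing how $\lambda_0$ scales with the coefficient size, which is in the spirit of its algorithmic use of the lemma (the doubling loop in Algorithm~\ref{alg:multi_dissipativity} terminates under either argument, but the explicit estimate indicates how many doublings to expect).
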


\begin{proof}
    Consider the characteristic polynomial of $A - \lambda B$.
    It can be written as 
    \[
    \operatorname{det}(A - \lambda B - t I) = (-\lambda)^n \operatorname{det} (B - A/\lambda + (t/\lambda I)).
    \]
    We set $T = t/\lambda$ and rewrite the latter determinant as $Q(T, 1/\lambda) := \operatorname{det}((B + T \cdot I) - A/\lambda)$.
    Since $Q$ is the determinant of a matrix with the entries linear in $T$ and $1/\lambda$, it is a bivariate polynomial in $T$ and $1/\lambda$ of total degree at most $n$.
    Furthermore, if we set $\lambda = \infty$ (equivalently, if we set $1/\lambda = 0$), we have $Q(T, 0) = \operatorname{det}(B + T\cdot I)$.
    Since $B$ is upper-triangular with ones on the diagonal, this determinant is equal to $\operatorname{det}(B + T\cdot I) = (T + 1)^n$, so $Q(T, 1/\lambda)$ can be written as $Q(T, 1/\lambda) = (T + 1)^n + \frac{1}{\lambda} p\left(T + 1, \frac{1}{\lambda}\right)$, where $p$ is a bivariate polynomial of the total degree at most $n - 1$ in $T + 1$ and $\frac{1}{\lambda}$.
    Let $C$ be an upper bound for the absolute value of the coefficients of $p$. 
    Then, for $\lambda > 1$, we can bound:
    \[
     \left\lvert p\left(T + 1, \frac{1}{\lambda}\right)\right\rvert < C n^2 \max(|T + 1|^{n - 1}, 1). 
    \]
    Let $T_0$ be any root of $Q(T, 1/\lambda)$.
    Then we have
    \[
     |T_0 + 1|^n \leqslant \frac{1}{\lambda} C n^2 \max(|T_0 + 1|^{n - 1}, 1). 
    \]
    Let us take $\lambda > Cn^2$.
    Then
    \[
    |T_0 + 1|^n < \max(|T_0 + 1|^{n - 1}, 1) \implies |T_0 + 1| < 1.
    \]
     So, in this case, the real part of any root of $Q$ will be negative.
    Then the same is true for the characteristic polynomial of $A - \lambda B$ because these two polynomials differ by scaling by a positive number $\lambda$.
    Therefore, $\lambda_0$ can be taken to be $\max(1, Cn^2)$.
\end{proof}

We will also use the following folklore analytic lemma.
\begin{lemma}\label{lem:conjugate}
    Let $\bx' = \Bf(\bx)$ a system of polynomial differential equations of dimension $n$ with an equilibrium point $\bx^\ast$.
    Let $\varphi \colon \mathbb{R}^n \to \mathbb{R}^n$ be an invertile change of coordinates, and let $\by' = \bg(\by)$ be the image of the system under the coordinate change. 
    Then the matrices $J_\bx(\Bf)|_{\bx = \bx^\ast}$ and $J_{\by}(\bg)|_{\by = \varphi(\bx^\ast)}$ are conjugate.
    In particular, they have the same eigenvalues.
\end{lemma}
\begin{proof}
    By the chain rule, we have
    \[
    \by' = (\varphi(\bx))' = J_{\bx}(\varphi) \bx' = J_{\bx}(\varphi)|_{\bx = \varphi^{-1}(\by)} f(\varphi^{-1}(\by)) = \bg(\by).
    \]
    Then we can write $J_\by(\bg)$ as
    \[
    J_{\bx}(\varphi)|_{\bx = \varphi^{-1}(\by)} J_{\bx}(f)|_{\bx = \varphi^{-1}(\by)} J_{\by}(\varphi^{-1}) + \sum\limits_{i = 1}^n A_i f_i(\varphi^{-1}(\by)),
    \]
    where $A_i$ is the Jacobian of the $i$-th column of $J_{\bx}(\varphi)|_{\bx = \varphi^{-1}(\by)}$.
    If we plug $\varphi(\bx^\ast)$ for $\by$, since $\Bf(\bx^\ast) = 0$, the latter sum will vanish, so we get
    \[
    J_{\bx}(\varphi)|_{\bx = \bx^\ast} J_{\bx}(f)|_{\bx = \bx^\ast} J_{\by}(\varphi^{-1})|_{\by = \varphi(\bx^{\ast})}.
    \]
    By the chain rule, the matrices $J_{\bx}(\varphi)|_{\bx = \bx^\ast}$ and $J_{\by}(\varphi^{-1})|_{\by = \varphi(\bx^{\ast})}$  are inverses to each other, so the Jacobians are indeed conjugated.
\end{proof}

\begin{proof}[Proof of Proposition~\ref{prop:gauging}]
    Before starting the proof, we would like to point at Example~\ref{eg:compute_proof2} in the Appendix illustrating the main steps of the proof.
    
    We define a map $\varphi\colon\mathbb{R}^{n + m} \to \mathbb{R}^{n + m}$ from a space with coordinates $(\bx, \by)$ to a space with coordinates $(\bx, \bz)$, where $\bz = (z_1, \ldots, z_m)$, by
    \begin{align*}
    &\varphi_i(\bx, \by) = x_i \quad \text{ for } 1 \leqslant i \leqslant n,\\
    &\varphi_{n + j}(\bx, \by) = y_j - g_j(\bx)\quad \text{ for } 1\leqslant j \leqslant m.
    \end{align*}
This map is invertible with the inverse given by $(\varphi^{-1})_i(\bx, \bz) = x_i$ for $1 \leqslant i \leqslant n$ and $(\varphi^{-1})_{n + j}(\bx, \bz) = z_j + g_j(\bx)$ for $1 \leqslant j \leqslant m$, so $\varphi$ is bijective.
Note that $\varphi(\bx^\circ, \bg(\bx^\circ)) = (\bx^\circ, \mathbf{0})$ for every $\bx^\circ \in \mathbb{R}^n$.
We apply a change of coordinates defined by $\varphi$ to the quadratic system~$\bx' = \bq_1(\bx, \by), \; \by' = \bq_2(\bx, \by)$ and obtain a (not necessarily quadratic) system of the form:
\begin{equation}\label{eq:transformed}
  \bx' = \widetilde{\bq}_1(\bx, \bz) \quad \text{ and } \quad \bz' = \widetilde{\bq}_2(\bx, \bz),
\end{equation}
where $\widetilde{q}_1 = q_1(\bx, \bz + \bg(\bz))$ and $\widetilde{q}_2$ can be found using the chain rule as follows:
\begin{equation}\label{eq:qtilde}
\bz' = (\by - \bg(\bx))' = q_2(\bx, \bz + \bg(\bx)) - J_{\bx}(\bg) \bx' = q_2(\bx, \bz + \bg(\bx)) - J_{\bx}(\bg) q_1(\bx, \bz + \bg(\bx)).
\end{equation}
Since the variety $\{(\bx^\circ, \bg(\bx^\circ)) \mid \bx^\circ \in \mathbb{R}^n\}$ was an invariant variety of~$\bx' = \bq_1(\bx, \by), \; \by' = \bq_2(\bx, \by)$ by construction, the linear space $\{(\bx^\circ, \mathbf{0}) \mid \bx^\circ\in \mathbb{R}^n \}$ is invariant for~\eqref{eq:transformed} and the restriction of~\eqref{eq:transformed} to this space coincides with the original system~$\bx' = \bp(\bx)$.
This implies the following constraints on~$\widetilde{\bq}_1$ and~$\widetilde{\bq}_2$:
\begin{itemize}
    \item $\widetilde{\bq}_1(\bx, \bz) = \bp(\bx) + \mathcal{O}(\bz)$, where $\mathcal{O}(\bz)$ stands for a polynomial with each monomial containing at least one of the $\bz$;
    \item $\widetilde{\bq}_2(\bx, \bz) = \mathcal{O}(\bz)$.
\end{itemize}
Due to these constraints, for every $\bx^{\circ} \in \mathbb{R}^n$, the Jacobian of $(\widetilde{\bq}_1, \widetilde{\bq}_2)$ at $(\bx^\circ, \mathbf{0})$ is of the form
\begin{equation}\label{eq:Jac_extended}
J_{\bx, \bz}(\widetilde{\bq}_1, \widetilde{\bq}_2)|_{\bx = \bx^{\circ}, \bz = \mathbf{0}} = \begin{pmatrix}
    J_{\bx}(\bp)|_{\bx = \bx^\circ} & \ast \\
    0 & J_{\bz}(\widetilde{\bq}_2)|_{\bx = \bx^\circ, \bz = \mathbf{0}}
\end{pmatrix}
\end{equation}
Let $h_1(\bx, \by), \ldots, h_m(\bx, \by)$ be the stabilizers of the quadratization (see Definition~\ref{def:stabilizer}).
We take an arbitrary parameter $\lambda \in \mathbb{R}$ and consider $\bq_{2, \lambda}(\bx, \by)$ defined~by
\begin{equation}\label{eq:q2lambda}
  \bq_{2, \lambda}(\bx, \by) = \bq_{2}(\bx, \by) - \lambda \bh(\bx, \by).
\end{equation}
Since the $h_i$'s are stabilizers, $\bg, \bq_{1}, \bq_{2, \lambda}$ is a quadratization of the original system for any value of $\lambda$ (see Definition~\ref{def:stabilizer}).
By using $q_{2, \lambda}$ instead of $q_2$ in~\eqref{eq:qtilde}, we
obtain $\widetilde{\bq}_{2, \lambda} = \widetilde{\bq}_2 - \lambda \bh(\bx, \bz + \bg(\bx))$.
Then, as in~\eqref{eq:Jac_extended}, we get
\begin{equation}\label{eq:Jac_lambda}
J_{\bx, \bz}(\widetilde{\bq}_1, \widetilde{\bq}_{2, \lambda})|_{\bx = \bx^{\circ}, \bz = \mathbf{0}} = \begin{pmatrix}
    J_{\bx}(\bp)|_{\bx = \bx^\circ} & \ast \\
    0 & (J_{\bz}(\widetilde{\bq}_2) - \lambda J_{\bz}(\bh))|_{\bx = \bx^\circ, \bz = \mathbf{0}}
\end{pmatrix}
\end{equation}
Observe that, since every $g_i$ is of the form $z_i$ plus polynomial in $\bx$ and $z$'s with smaller indices, 
$J_{\bz}(\bh)$ is a lower-triangular matrix with ones on the diagonal.

Having such a convenient expression for the Jacobian, we consider the given equilibria $\bx^\ast_1, \ldots, \bx^\ast_\ell$.
For any $\bx^\circ \in \{\bx_1^\ast, \ldots, \bx_\ell^\ast\}$, the eigenvalues of the Jacobian~\eqref{eq:Jac_lambda} are the union of the eigenvalues of $J_{\bx}(\bp)|_{\bx = \bx^\circ}$ and the eigenvalues of $(J_{\bz}(\widetilde{\bq}_2) - \lambda J_{\bz}(\bh))|_{\bx = \bx^\circ, \bz = \mathbf{0}}$.
Applying Lemma~\ref{lem:add_triang} to $\ell$ pairs of matrices $A = J_{\bz}(\widetilde{\bq}_2)|_{\bx = \bx_i^\ast, \bz = 0}$ and $B = J_{\bz}(\bh)|_{\bx = \bx_i^\ast, \bz = 0}$, we choose $\lambda$ to be larger than any of the $\lambda_0$'s provided by the lemma.
Then all the eigenvalues of this block will also have negative real parts.
By Lemma~\ref{lem:conjugate}, the same true for the Jacobian of $\bx' = \bq_1(\bx, \by),\; \by' = \bq_{2, \lambda}(\bx, \by)$.
\end{proof}

\section{Algorithms} \label{sec:alg}

Based on the proof of Theorem~\ref{thm:main}, finding a dissipativity-preserving quadratization can be done in two following steps:
\begin{enumerate}[label=\textbf{(Step \arabic*)}, leftmargin=16mm]
    \item finding an inner-quadratic quadratization
    \item modifying the corresponding quadratic system to achieve dissipativity at the given equilibria.
\end{enumerate}
In this section, we give algorithms for both steps.
Section~\ref{sec:inner} shows how to modify the quadratization algorithm from~\cite{bychkov2021optimal} to search for inner-quadratic quadratizations.
Using this algorithm as a building block, we give a general algorithm for computing dissipativity-preserving quadratizations in Section~\ref{sec:many_equilibria}.

\subsection{Computing inner-quadratic quadratization}\label{sec:inner}
Our algorithm follows the general  \emph{Branch-and-Bound (B\&B)} paradigm ~\cite{morrison2016branch} and is implemented based on the optimal monomial quadratization algorithm from~\cite[Section 4]{bychkov2021optimal}.
Therefore, we will describe the algorithm briefly, mainly focusing on the differences with the algorithm from~\cite{bychkov2021optimal}.

We define each subproblem~\cite[Definition 3.3]{bychkov2021optimal} as a set of new monomial variables $\{y_1(\bx), \ldots, y_{\ell}(\bx)\}$, and the subset of the \textit{search space}~\cite[Definition 3.1]{bychkov2021optimal} for the subproblem will be the set of all \textit{quadratizations} including these new variables. 
To each subproblem $\{y_1(\bx), \ldots, y_{\ell}(\bx)\}$, the algorithm from~\cite{bychkov2021optimal} assigns a set of generalized variables $\mathrm{V}$ (new variables, $\bx$'s, and $1$) and a set of nonsquares $\mathrm{NS}$ (monomials in the right-hand side which are not quadratic in the generalized variables~\cite[Definition 4]{bychkov2021optimal}). 
Additionally, we define the set of non-inner-quadratic new variables $\mathrm{NQ}$ which consist of all the monomials among $y_1(\bx), \ldots, y_{\ell}(\bx)$ which are not quadratic in $\{y_1(\bx), \ldots, y_{\ell}(\bx), x_1, \ldots, x_n\}$.
In particular, a subproblem is an inner-quadratic quadratization \textit{if and only if} $\mathrm{NS}=\varnothing$ and $\mathrm{NQ}=\varnothing$.
Note that $\mathrm{NS}$ and $\mathrm{NQ}$ are disjoint since $\mathrm{NQ} \subseteq \mathrm{V}$ and $\mathrm{V} \cap \mathrm{NS} = \varnothing$. 
\begin{example}
The notation previously introduced will now be demonstrated through the system $x'=x^{4}+x^{3}$ (taken from~\cite[Example 4]{bychkov2021optimal}, to display the difference between two algorithms).
We consider a subproblem with one already added new variable $y_{1}(x) := x^{3}$ (so, $y_{1}'=3x^{2}x'=3x^{6}+3x^{5}$). 
In this case, we have
$$
V=\{1, x, x^{3}\}, \quad V^{2}= \left\{1, x, x^2, x^3, x^4, x^6\right\}, \quad NS=\{x^{5}\}, \quad \underline{{NQ}=\{x^{3}\}}.
$$
\end{example}

\indent The algorithm starts from the subproblem $\varnothing$. For every iteration, we select one element $m$ from $\mathrm{NS} \cup \mathrm{NQ}$ (using a heuristic score function~\cite[Section 4.1]{bychkov2021optimal}) and compute all the decompositions of the form $m=m_{1}m_{2}$, where $m_1$ and $m_2$ are monomials. 
If $m \in \mathrm{NS}$, for every such decomposition, we create a new subproblem by adding the elements of $\left\{m_1, m_2\right\} \backslash V$ and at least one new variable will be added due to the property of $\mathrm{NS}$. 
If $m \in \mathrm{NQ}$, we only do this for the decompositions with $m_1 \neq 1$ and $m_2\neq 1$.

We apply this operation recursively and stop when $\mathrm{NS} \cup \mathrm{NQ}=\varnothing$ for each branch. 
Therefore, we can find all the possible inner-quadratic quadratization of the system. 
To improve the efficiency, we do not consider branches with more new variables than in already found answers and use versions of domain-specific pruning rules from~\cite{bychkov2021optimal}. 
The algorithm is summarized as Algorithm~\ref{alg:inner-quadratization}.
{\small
\begin{algorithm}[!h]
\caption{Computing optimal inner-quadratic quadratization }\label{alg:inner-quadratization}
\begin{description}
    \item[Input] 
    \begin{itemize}
        \item[]
        \item[-] polynomial ODEs system~$\bx' = \bp(\bx)$.
        \item[-] a set of already chosen new variables $y_1(\bx), \ldots, y_{\ell}(\bx)$ (at the first call, $\varnothing$).
        \item[-] the order $N$ of the smallest inner-quadratic quadratization found so far (at the first call, $N = \infty$).
    \end{itemize}
    \item[Output] a more optimal inner-quadratic  quadratization containing $y_1(\bx), \ldots, y_{\ell}(\bx)$ if such quadratization exists.
\end{description}

\begin{enumerate}[label = \textbf{(Step~\arabic*)}, leftmargin=*, align=left, labelsep=2pt, itemsep=0pt]
    \item If $y_1(\bx), \ldots, y_{\ell}(\bx)$ is a inner-quadratic quadratization, that is, $\mathrm{NS}=\varnothing$ and $\mathrm{NQ}=\varnothing$, and $\ell < N$, \textbf{return} $y_1, \ldots, y_\ell$.
    \item Select the element $m \in \mathrm{NS} \cup \mathrm{NQ}$ with the smallest \textit{score}, compute all the decompositions $m=m_{1}m_{2}$ as a product of two monomials. If $m \in NQ$, we only consider the decompositions with $m_{1} \neq 1$ and $m_{2} \neq 1$.
    \item For each decomposition $m = m_1m_2$ from the previous step, we consider a subproblem $\{z_1, \ldots, z_\ell\} \cup (\{m_1, m_2\} \setminus V)$.
    If its size is less than $N$ and none of the pruning rules apply, we run recursively on this subproblem and update $N$ if a more optimal inner-quadratic quadratization has been found by the recursive call.
\end{enumerate}
\end{algorithm}}



\subsection{Computing dissipativity-preserving quadratization}\label{sec:many_equilibria}

Based on the proof of Theorem~\ref{thm:main}, the main idea behind the search for dissipativity-preserving quadratization is to start with any inner-quadratic quadratization, and replace the right-hand side for the new variables, $\bq_2(\bx, \by)$, by $\bq_2(\bx, \by) - \lambda \bh(\bx, \by)$ (see~\eqref{eq:q2lambda}) for increasing values of $\lambda$ until the desired quadratization is found.
The detailed algorithm is given as Algorithm~\ref{alg:multi_dissipativity}, the proof of its correctness and termination is provided by Proposition~\ref{prop:algo_correctness}, and a step-by-step example is given in Example~\ref{ex:algo}.

{\small
\begin{algorithm}[!h]
\caption{Computing a quadratization dissipative at all provided equilibria}\label{alg:multi_dissipativity}
\begin{description}
    \item[Input] polynomial ODE system~$\bx' = \bp(\bx)$ and a list of its dissipative equilibria $\bx^\ast_1, \ldots, \bx_\ell^\ast$;
    \item[Output] a quadratization of the system which is dissipative at $\bx_1^\ast, \ldots, \bx_\ell^\ast$.
\end{description}

\begin{enumerate}[label = \textbf{(Step~\arabic*)}, leftmargin=*, align=left, labelsep=2pt, itemsep=0pt]
    \item\label{step:compute_quadr} Compute an inner-quadratic quadratization of~$\bx' = \bp(\bx)$ using \textbf{Algorithm~\ref{alg:inner-quadratization}}. Let the introduced variables be $y_1 = g_1(\bx), \ldots, y_m = g_m(\bx)$. Let $\bq_1(\bx, \by)$ and $\bq_2(\bx, \by)$ be the right-hand sides of the quadratic system as in Definition~\ref{def:quadr}.
    If the corresponding quadratic system is dissipative at $\bx_1^\ast, \ldots, \bx_\ell^\ast$, \textbf{return} it.
    \item\label{step:stabilizer} Construct the stabilizers $\bh(\bx, \by)$ for the quadratization from~\ref{step:compute_quadr} as in Definition~\ref{def:stabilizer}, and set $\lambda = 1$.
    \item\label{step:while} While \textbf{True}
    \begin{enumerate}
        \item\label{step:form_system} Construct a quadratic system $\Sigma_\lambda$
        \[
          \bx' = \bq_1(\bx, \by), \; \by' = \bq_2(\bx, \by) - \lambda \bh(\bx, \by).
        \]
        \item\label{step:check_return} Check if $\Sigma_\lambda$ is dissipative at $(\bx_i^\ast, \bg(\bx_i^\ast))$ for every $1 \leqslant i \leqslant \ell$ (using the Routh-Hurwitz criterion~\cite[Chapter~XV]{Gantmacher}).
        If yes, \textbf{return} quadratization defined by $\bg(\bx), \bq_1(\bx, \by), \bq_2(\bx, \by) - \lambda \bh(\bx, \by)$.
        Otherwise, set $\lambda = 2\lambda$.
    \end{enumerate}
\end{enumerate}
\end{algorithm}}

\begin{proposition}\label{prop:algo_correctness}
    Algorithm~\ref{alg:multi_dissipativity} always terminates and produces a correct output.
\end{proposition}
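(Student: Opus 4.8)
The plan is to split the statement into two independent claims---correctness of the returned object and termination of the \textbf{while} loop---and to observe that essentially all of the required analytic work has already been carried out in Proposition~\ref{prop:gauging} and Lemma~\ref{lem:add_triang}.

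First I would verify correctness under the assumption that the loop halts. When the algorithm returns in Step~\ref{step:check_return}, it outputs the quadratization defined by $\bg, \bq_1, \bq_2 - \lambda \bh$. This is a genuine quadratization of $\bx' = \bp(\bx)$: the vector $\bq_2 - \lambda \bh$ differs from $\bq_2$ only by a scalar multiple of the stabilizers $\bh$, and by Definition~\ref{def:stabilizer} each stabilizer is at most quadratic and vanishes under the substitution $\by = \bg(\bx)$, so subtracting $\lambda \bh$ keeps the degree at most two while preserving the defining relations of the quadratization. Moreover, the return occurs only after the Routh--Hurwitz test confirms that the characteristic polynomial of the Jacobian of $\Sigma_\lambda$ at each $(\bx_i^\ast, \bg(\bx_i^\ast))$ has all roots with negative real part; since this is exactly the definition of dissipativity at these points, the output is dissipative at $\bx_1^\ast, \ldots, \bx_\ell^\ast$, as required.

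The substantive part is termination, and here I would invoke Proposition~\ref{prop:gauging} directly. The key observation is that the system $\Sigma_\lambda$ built in Step~\ref{step:form_system} is precisely the system with right-hand side $\bq_1, \bq_{2,\lambda}$ from~\eqref{eq:q2lambda}. The analysis in the proof of Proposition~\ref{prop:gauging} shows, via the change of coordinates $\varphi$ and the block-triangular form~\eqref{eq:Jac_lambda}, that the eigenvalues of the Jacobian of $\Sigma_\lambda$ at $(\bx_i^\ast, \bg(\bx_i^\ast))$ are the union of the eigenvalues of $J_{\bx}(\bp)|_{\bx = \bx_i^\ast}$ (all with negative real part because $\bx_i^\ast$ is a dissipative equilibrium) and those of $(J_{\bz}(\widetilde{\bq}_2) - \lambda J_{\bz}(\bh))|_{\bx = \bx_i^\ast, \bz = \mathbf{0}}$. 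Applying Lemma~\ref{lem:add_triang} to each of the finitely many equilibria yields a threshold $\lambda_0$ such that for every $\lambda > \lambda_0$ the second block also has only eigenvalues with negative real part, i.e.\ $\Sigma_\lambda$ is dissipative at all the given equilibria.

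It then remains to reconcile this with the geometric schedule $\lambda = 1, 2, 4, \ldots$ used by the algorithm: since $\lambda$ is doubled at every unsuccessful iteration, it tends to infinity, so some iteration reaches $\lambda > \lambda_0$, at which point the check in Step~\ref{step:check_return} succeeds and the loop returns. The only point requiring mild care is that Lemma~\ref{lem:add_triang} is stated for a \emph{single} pair $(A,B)$, so I would take $\lambda_0$ to be the maximum of the finitely many thresholds it produces over the $\ell$ input equilibria; the finiteness of the equilibrium list is exactly what makes this maximum well defined. I do not anticipate any genuine obstacle beyond this bookkeeping, since the hard eigenvalue estimate is precisely the content of Lemma~\ref{lem:add_triang}, and the correctness half reduces to the standard guarantee of the Routh--Hurwitz criterion.
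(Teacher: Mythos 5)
Your proposal is correct and follows essentially the same route as the paper's own proof: correctness via the degree bound and vanishing of the stabilizers $\bh$ under $\by = \bg(\bx)$ together with the Routh--Hurwitz check, and termination by identifying $\Sigma_\lambda$ with the system $\bq_1, \bq_{2,\lambda}$ from~\eqref{eq:q2lambda} and appealing to the threshold $\lambda_0$ established in the proof of Proposition~\ref{prop:gauging} (which, as you note, is obtained by applying Lemma~\ref{lem:add_triang} to each of the $\ell$ equilibria and taking the maximum), exceeded after finitely many doublings of $\lambda$. Your write-up is in fact slightly more explicit than the paper's, which delegates the block-triangular eigenvalue analysis entirely to Proposition~\ref{prop:gauging} rather than recapitulating it.
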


\begin{proof}
    We will start with proving the correctness. 
    Note that since $\bg, \bq_1, \bq_2$ computed in~\ref{step:compute_quadr} yield a quadratization of the input system, and $\bh$ vanishes if $\by$ is replaced with $\bg(\bx)$, then $\bg, \bq_1, \bq_2 - \lambda \bh$ yield a quadratization of the original system as well.
    Furthermore, if the algorithm returned at~\ref{step:check_return}, then this quadratization is dissipative at $\bx_1^\ast, \ldots, \bx_\ell^\ast$.

    The termination of the algorithm follows from the proof of Proposition~\ref{prop:gauging}.
    We observe that the constructed $\bq_2 - \lambda \bh$ is exactly $\bq_{2, \lambda}$ in the notation of the proof, and it is shown that there exists $\lambda_0$ such that, for every $\lambda > \lambda_0$, $\bg, \bq_1, \bq_{2, \lambda}$ is dissipative at $\bx_1^\ast, \ldots, \bx_\ell^\ast$.
    Since $\lambda$ in the algorithm is doubled on each iteration of the while loop, it will at some point exceed $\lambda_0$, and the algorithm will terminate.
\end{proof}

\begin{example}\label{ex:algo}
We will illustrate how Algorithm \ref{alg:multi_dissipativity} works with the following differential equation:
\begin{equation}\label{eq:large_lambda}
    x' = -x (x - a) (x - 2a)
\end{equation}
where $a$ is a positive scalar parameter. 
The system's equilibria are $0, a, 2a$, and, among them, $x=0$ and $x=2a$ are dissipative.
Regardless of the value of $a$, Algorithm~\ref{alg:inner-quadratization} called at~\ref{step:compute_quadr} will produce an inner-quadratic quadratization with one new variable $y = x^2$ and quadratic system:
\[
\begin{cases}
    x' = -xy + 3ax^{2} -2a^2x, \\
    y' = -2y^{2}+6axy-4a^2x^{2}
\end{cases}
\]
The stabilizer computed at~\ref{step:stabilizer} will be $h(x, y) = y - x^2$.

Now let us fix $a = 1$ and continue with~\ref{step:while}.
At~\ref{step:form_system}, we form a new quadratic system $\Sigma_\lambda$:
\[
\begin{cases}
    x' = -xy + 3x^{2} -2x, \\
    y' = -2y^{2}+6xy-4x^{2} - \lambda (y-x^{2})
\end{cases}
\]
For $\lambda = 1, 2, 4, 8, \ldots$ we check the eigenvalues of its Jacobian at points $(0, 0)$ and $(2, 4)$. 
The Jacobian of $\Sigma_\lambda$ is
\[
J=\left[\begin{array}{cc}
-y+6 x-2 & -x \\
6 y+2 \lambda x-8 x\;\; & -4 y-\lambda+6 x
\end{array}\right]
\]
The eigenvalues we get on different iterations of the while-loop are summarized in Table~\ref{tab:eigenvalues} (the ones with nonnegative real parts are bold).
\begin{table}[htbp!]
    \renewcommand{\arraystretch}{1.2} 
\setlength{\tabcolsep}{8pt} 
    \centering
    \begin{tabular}{|c|c|c|}
    \hline
       $\lambda$ & at $(0, 0)$ & at $(2, 4)$ \\ \hline\hline
     $1$ & $-2, \; -1$ & $-2,\; \mathbf{3}$ \\ \hline
     $2$ & $-2,\; -2$ & $-2, \; \mathbf{2}$ \\ \hline
     $4$ & $-2,\; -4$ & $-2, \; \mathbf{0}$ \\ \hline
     $8$ & $-2,\; -8$ & $-2,\; -4$ \\
     \hline
    \end{tabular}
    \caption{Eigenvalues of the Jacobian of $\Sigma_\lambda$ at~\ref{step:check_return}}\label{tab:eigenvalues}
\end{table}

From the table we see, that the algorithm will stop and return at $\lambda = 8$.
Note that our implementation offers three way of verifying the dissipativity: by computing the eigenvalues directy numerically (with {\sc numpy}) or symbolically (with {\sc sympy}) or by using the Routh-Hurwitz criterion~\cite[Chapter~XV]{Gantmacher}, \cite[Chapter~3]{RHbook} (via {\sc tbcontrol} package~\cite{tbcontrol}).
The numerical evaluation of eigenvalues is the fastest (see Tables~\ref{tab:performance_lambda} and~\ref{tab:performance}) but does not yield fully rigorous guarantees, the other two methods may be slower but provide such guarantees.

As the value of $a$ increases, the original system is more unstable at equilibrium $x_{eq} = 2a$, which requires a larger value of $\lambda$ in order to make the system dissipative at $(x_{eq}, x_{eq}^{2})$. We compute the dissipative quadratization of the system~\eqref{eq:large_lambda} with different values of $a$ and the running time for each method, which is presented in Table \ref{tab:performance_lambda}.
\begin{table}[!h]
\centering
\renewcommand{\arraystretch}{1.2} 
\setlength{\tabcolsep}{8pt} 
\begin{tabular}{|c|c|c|c|c|}
\hline
$a$ & $\lambda$ (output) & time (\textit{Numpy}) & time (\textit{Routh-Hurwitz}) & time (\textit{Sympy}) \\
\hline
\hline
1 & 8 & 33.63 & 36.89 & 40.98 \\ \hline
5 & 128 & 34.04 & 38.36 & 39.84 \\ \hline
10 & 512 & 33.07 & 41.91 & 43.66 \\ \hline
50 & 16384 & 33.38 & 41.18 & 54.70 \\ \hline
100 & 65536 & 34.90 & 43.06 & 54.31 \\ \hline
\end{tabular}\newline
\caption{Output $\lambda$ value and runtimes (in milliseconds) with different methods for $a$ of the system \ref{eq:large_lambda}, results were obtained on a laptop Apple M2 Pro CPU @ 3.2 GHz, MacOS Ventura 13.3.1, CPython 3.9.1. 
Runtime is averaged over $10$ executions.}
\label{tab:performance_lambda}
\end{table}
\end{example}


\section{Case studies}\label{sec:case_studies}

The code for reproducing the results of the case studies below is available in the ``Examples'' folder of~\cite{code}.

\subsection{Application to reachability analysis}\label{sec:reachability}

The reachability problem is: given an ODE system $\bx' = \bp(\bx)$, a set $S\subseteq \mathbb{R}^n$ of possible initial conditions, and a time $t \in \mathbb{R}_{>0}$, compute a set containing the set
\[
\{\bx(t) \mid \bx'=\bp(\bx) \;\&\; \bx(0) \in S\} \subseteq \mathbb{R}^n
\]
of all points reachable from $S$ at time $t$.
One recent approach to this problem in the vicinity of a dissipative equilibrium $\bx^\ast$ proposed by Forets and Schilling in~\cite{Forets_2021} is to use Carleman linearization to reduce the problem to the linear case which is well-studied.
However, the approach described in~\cite{Forets_2021} relied on explicit bounds available
only for quadratic systems under the assumption of dissipativity and weak nonlinearity (see~\cite[definition 1 and 2]{Forets_2021}).
Algorithm~\ref{alg:multi_dissipativity} allows this restriction to be relaxed by computing a quadratization which preserves the dissipativity of $\bx^\ast$.

We will illustrate this idea using the \emph{Duffing} equation
\[
  x'' = kx + ax^{3} + bx'
\]
which describes a damped oscillator with non-linear restoring force.
The equation can be written as a first-order system by introducing $x_1 := x, x_2 := x'$ as follows
\[
    x_1' = x_2,\quad 
    x_2' = kx_1 + a x_1^3 + b x_2.
\]
We take $a = 1, b = -1, k = 1$. 
Then the system will have three equilibria $\mathbf{x}^\ast = (0, 0), (-1, 0), (1, 0)$, among which it will be dissipative only at the origin.
Algorithm~\ref{alg:inner-quadratization} finds an inner-quadratic quadratization for the system using a new variable $y(\bx)=x_1^2$ resulting in the following quadratic system:
$$
x_1^{\prime}=x_2,\quad
x_2^{\prime}=a x_1 y +b x_2+k x_1,\quad
y^{\prime}=2 x_1 x_2.
$$
Obviously, the quadratization is an inner-quadratic quadratization as well. 
By applying Algorithm \ref{alg:multi_dissipativity}, we get $\lambda = 1$ with the following dissipative quadratization:
\begin{equation} \label{eq:reachability_system}
\left\{\begin{array}{l}
x_1^{\prime}=x_2 \\
x_2^{\prime}= x_1y + x_1 - x_2 \\
y^{\prime}=-y+x_1^2+2 x_1 x_2
\end{array}\right.
\end{equation}

For the initial conditions $x_1(0) = 0.1, \; x_2(0) = 0.1, \; y(0) = x_1(0)^2 = 0.01$, system~\eqref{eq:reachability_system} satisfies the requirement of the algorithm from~\cite{Forets_2021}.
We apply the algorithm with truncation order $N = 5$ and report the result of the reachability analysis in Figure \ref{fig:reachability}.
The grey curve is the computed trajectory and the blue area is an upper bound for the reachable set.
\begin{figure}[h!] 
    \centering
    \includegraphics[width=0.9\textwidth]{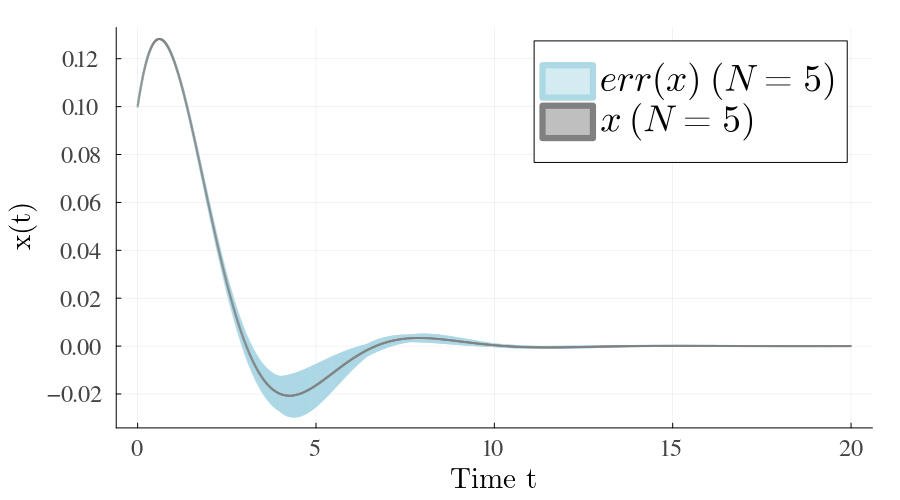}
    \caption{Reachability analysis results with the computed trajectory (gray) and overapproximation of the reachable set (light blue).
    Initial condition $\mathcal{X}_0 = [0.1, 0.1, 0.01]$, truncation order $N = 5$, and the estimate reevaluation time $t = 4$ (see~\cite[Section~6.1]{Forets_2021}).}
    \label{fig:reachability}
\end{figure}

\subsection{Preserving bistability}

An ODE model is called \emph{bistable} (or multistable) if it has at least two stable equilibria.
This is a fundamental property for models in life sciences since such a model describes a system that can exhibit a switch-like behaviour, in other words, ``make a choice''~\cite{craciun}.
One of the smallest possible bistable models arising from a simple chemical reaction network~\cite[Table 1]{wilhelm} is given by the following scalar ODE:
\[
x' = k_1 x^2 - k_2 x^3 - k_3 x,
\]
where $k_1, k_2, k_3$ are positive reaction rate constants.
The equation has always one dissipative equilibrium at $x = 0$.
It has two more equilibria as long as $k_1^2 > 4k_2k_3$, and in this case, the largest of them will be dissipative as well.
For any nonzero parameter values, the inner-quadratic quadratization computed by Algorithm~\ref{alg:inner-quadratization} will consist of a single new variable $y(x) := x^2$ and the quadratic system:
\begin{equation}\label{eq:multistable_lifted}
  x' = k_1w - k_2xw - k_3x, \quad y' = 2k_1xy - 2k_2y^2 - 2k_3y.
\end{equation}
For the case-study, we pick $k_1 = 0.4, k_2 = 1, k_3 = 0.03$.
For these parameter values, the dissipative equilibria are $x = 0$ and $x = 0.3$, and Algorithm~\ref{alg:multi_dissipativity} finds that~\eqref{eq:multistable_lifted} is dissipative at them already. The plot below shows that, indeed, the trajectories of~\eqref{eq:multistable_lifted} staring in the neighbouthoods of $(0, 0)$ and $(0.3, 0.09)$ converge to the respective equilibria.
\begin{figure}[!htbp]
    \centering
    \includegraphics[width=0.9\textwidth]{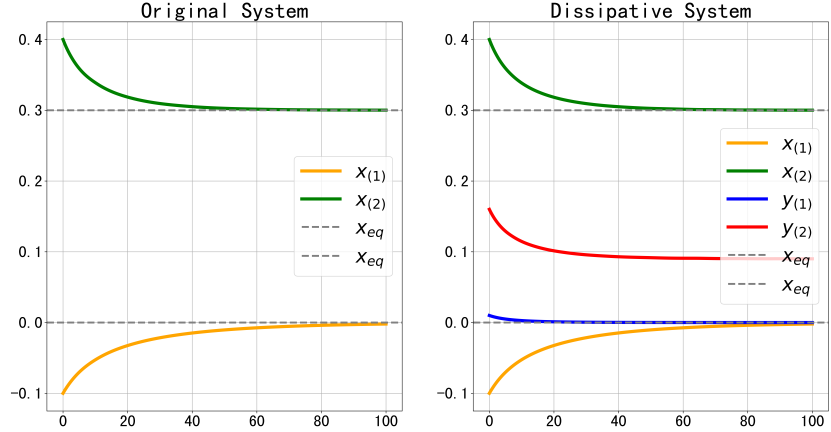}
    \caption{Plot of the original equation and system (\ref{eq:multistable_lifted}) with initial state $\mathcal{X}_{0}=[x_{0},w_{0}]=[-0.1,0.01]$ ($x_{(1)}$, $w_{(1)}$) and $\mathcal{X}_{0}=[0.4,0.16]$ ($x_{(2)}$, $w_{(2)}$).}
    \label{fig:bistable}
\end{figure}

\subsection{Coupled Duffing oscillators}

For a larger example, we will consider an ensemble consisting of Duffing oscillators from Section~\ref{sec:reachability} which is an extended version of a pair of coupled oscillators from~\cite{Balamurali2022}.
The model consisting of $n$ oscillators is parametrized by a number $\delta \in \mathbb{R}$ and a matrix $A \in \mathbb{R}^{n \times n}$, and is defined by the following system:
\[
\bx'' = A\bx - (A\bx)^3 - \delta \bx', 
\]
where $\bx = [x_1, \ldots, x_n]^\top$ are the positions of the oscillators, and $(A\bx)^3$ is the component-wise cube of vector $A\bx$.
Similarly to Section~\ref{sec:reachability}, we can rewrite this as a first-order system of dimension $2n$ by introducing new variables $\bz = [z_1, \ldots, z_n]^\top$ for the derivatives of $\bx$:
\[
\dot{\bx} = \bz, \quad \bz' = A\bx - (A\bx)^3 - \delta \bz.
\]
Similarly, to~\cite[Table~1]{Balamurali2022}, if the eigenvalues of $A$ are positive real numbers, then this system has $2^n$ dissipative equilibria.
We run our code for $n = 1, \ldots, 8$ taking $\delta = 2$ and $A$ being the tridiagonal matrix with ones on the diagonal and $\frac{1}{3}$ on the adjacent diagonals.
Table~\ref{tab:performance} reports, for each $n$, the number of introduced variables and the times for computing inner-quadratic quadratization (Algorithm~\ref{alg:inner-quadratization}) and making it dissipative at all $2^n$ equilibria (Algorithm~\ref{alg:multi_dissipativity} using {\sc numpy} for the eigenvalue computation or the symbolics Routh-Hurwitz criterion).
We can observe that numerical methods for checking the dissipativity scale well (given that the number of points grows exponentially) while symbolic methods become very costly as the dimension grows.
\begin{table}[!hbtp]
\centering
\renewcommand{\arraystretch}{1.2} 
\begin{tabular}{|c|c|c|c|c|c|c|}
\hline
\multirow{2}{*}{$n$} & \multirow{2}{*}{dimension} & \multirow{2}{*}{\# equilibria} & \multirow{2}{*}{\# new vars} & \multirow{2}{*}{time (inner-quadratic)} & \multicolumn{2}{c|}{time (dissipative)}    \\ \cline{6-7} 
                     &                            &                                &                              &                                         & {\sc numpy} & {\sc Routh-Hurwitz} \\ \hline
1                    & 2                          & 2                              & 1                            & 0.02                                    & 0.05  &     0.07          \\ \hline
2                    & 4                          & 4                              & 2                            & 0.07                                    & 0.19  &     0.65          \\ \hline
3                    & 6                          & 8                              & 4                            & 0.20                                    & 0.74  &     36.57          \\ \hline
4                    & 8                          & 16                             & 5                            & 0.39                                    & 1.62  &    1179.33           \\ \hline
5                    & 10                         & 32                             & 7                            & 0.72                                    & 4.30  &   > 2000            \\ \hline
6                    & 12                         & 64                             & 9                            & 1.20                                    & 11.28 &   > 2000            \\ \hline
7                    & 14                         & 128                            & 10                           & 1.75                                    & 28.23 &   > 2000            \\ \hline
8                    & 16                         & 256                            & 12                           & 2.63                                    & 78.70 &  > 2000             \\ \hline
\end{tabular}
\caption{Runtimes (in seconds) for $n$ coupled Duffing oscillators, results were obtained on a laptop with the following parameters: Apple M2 Pro CPU @ 3.2 GHz, MacOS Ventura 13.3.1, CPython 3.9.1.}
\label{tab:performance}
\end{table}

\vspace{-3mm}
\section{Conclusions}\label{sec:conclusions}

While various quadratization techniques have been used recently in a number of application areas, and in most of the cases this was primarily involving numerical simulations, we are not aware of prior general results on the stability properties of the quadratized systems.
In this paper, we studied quadratizations that preserve dissipativity at prescribed equilibria. 
First, we have shown that, for any set of dissipative equilibria such a quadratization exists.
Then we have presented an algorithm capable of computing a quadratization with this property with dimension low enough to be of interest for applications.
We showcase the algorithm on several case studies, including examples from reachability analysis and chemical reaction network theory.

The key ingredient of our algorithm is the computation of a quadratization (we call it inner-quadratic) which gives us substantial control over the stability properties of the quadratized system.
We expect that this construction will be useful for further research in this direction.

In future research, we plan to extend the results of the paper in different directions.
One natural problem is to extend the results and algorithms from the present paper beyond polynomial systems, for example, by designing an algorithm for dissipativity-preserving polynomialization.
Additionally, exploring the preservation of other stability properties, such as limit cycles, attractors, and Lyapunov functions, is another promising avenue for research.


\bibliographystyle{splncs04}
\bibliography{bibliography}

\appendix
\section{Example illustrating the proof of Proposition~\ref{prop:gauging}}

\begin{example}\label{eg:compute_proof2}
We use the \textbf{Example \ref{eg:stabilizer_example} (Stabilizers)} to illustrate the proof for Proposition \ref{prop:gauging}. Consider the following system:
$$
x_1^{\prime}=-3 x_1+x_2^4 , \quad 
x_2^{\prime}=-2 x_2+x_1^2.
$$
By applying \textbf{Algorithm \ref{alg:inner-quadratization}}, we introduce the following new variables to obtain the inner-quadratic system
$$
y_1=x_1^2, \quad
y_2=x_2^2, \quad
y_3=x_1 x_2, \quad
y_4=x_2^3 = x_{2}y_{2}.
$$
Then we got the inner-quadratic system as follows:
$$
\bx'=\bq_{1}(\bx,\by): 
\begin{cases}
    x_1' &= -3x_1 + y_2^2 \\
x_2' &= -2x_2 + x_1^2
\end{cases}
\quad 
\by'=\bq_{2}(\bx,\by):
\begin{cases}
    y_1' &= 2y_4y_3 - 6x_1^2 \\
y_2' &= 2y_1x_2 - 4x_2^2 \\
y_3' &= y_4y_2 + y_1x_1 - 5x_1x_2 \\
y_4' &= 3y_1y_2 - 6y_2x_2
\end{cases}
$$
We apply the change of coordinates defined by $\varphi$ to the inner-quadratic system~$\bx' = \bq_1(\bx, \by), \; \by' = \bq_2(\bx, \by)$ and obtain the following system:
$$
\bx'=\widetilde{\bq}_1(\bx, \bz): 
\begin{cases}
    x_1' &= -3 x_1+x_2^4+2 x_2^2 z_2+z_2^2 \\
x_2' &= x_1^2-2 x_2
\end{cases}
$$
$$
\bz' = \widetilde{\bq}_2(\bx, \bz):
\begin{cases}
    z_1' &= -4 x_1 x_2^2 z_2+2 x_1 x_2 z_4-2 x_1 z_2^2+2 x_2^3 z_3+2 z_3 z_4 \\
z_2' &= 2 x_2 z_1 \\
z_3' &= x_1 z_1-x_2^3 z_2+x_2^2 z_4-x_2 z_2^2+z_2 z_4 \\
z_4' &= 3x_1^2 z_2+3 x_2^2 z_1-6 x_2 z_2+3 z_1 z_2
\end{cases}
$$
One can observe that the system satisfies to the constraints on~$\widetilde{\bq}_1$ and~$\widetilde{\bq}_2$ that $\widetilde{\bq}_1(\bx, \bz) = \bp(\bx) + \mathcal{O}(\bz)$ and $\widetilde{\bq}_2(\bx, \bz) = \mathcal{O}(\bz)$ where $\mathcal{O}(\bz)$ stands for a polynomial with each monomial containing at least one of the $\bz$. Then, we can compute the Jacobian matrix and obtain the following matrix:
\begin{equation}\label{eq:Jac_extended}
J_{\bx, \bz}(\widetilde{\bq}_1, \widetilde{\bq}_2)|_{\bx = \bx^{\circ}, \bz = \mathbf{0}} = \left[\begin{array}{cccccc}
-3 & 4 x_{2}^3 & 0 & 2 x_{2}^2 & 0 & 0 \\
2 x_{1} & -2 & 0 & 0 & 0 & 0 \\
0 & 0 & 0 & -4 x_{1} x_{2}^2 & 2 x_{2}^3 & 2 x_{1} x_{2} \\
0 & 0 & 2 x_{2} & 0 & 0 & 0 \\
0 & 0 & x_{1} & -x_{2}^3 & 0 & x_{2}^2 \\
0 & 0 & 3 x_{2}^2 & 3x_{1}^2-6 x_{2} & 0 & 0
\end{array}\right]
\end{equation}
We have the stabilizers of the quadratization as follows:
\begin{align*}
    h_{1}(\bx, \by) &=y_1-x_1^2, & h_{2}(\bx, \by) &=y_2-x_2^2,\\
    h_{3}(\bx, \by) &=y_3-x_1 x_2, &
    h_{4}(\bx, \by) &= y_4-x_{2}y_{2}
\end{align*}
We take an arbitrary parameter $\lambda \in \mathbb{R}$ and consider $\bq_{2, \lambda}(\bx, \by)$ as the definition in equation \ref{eq:q2lambda}:
\begin{equation}
  \bq_{2, \lambda}(\bx, \by) = \bq_{2}(\bx, \by) - \lambda \bh(\bx, \by)=
\begin{cases}
2y_4y_3 - 6x_1^2 - \lambda (y_1-x_1^2)\\
2y_1x_2 - 4x_2^2 - \lambda (y_2-x_2^2)\\
y_4y_2 + y_1x_1 - 5x_1x_2 - \lambda (y_3-x_1 x_2)\\
3y_1y_2 - 6y_2x_2 - \lambda (y_4-x_{2}y_{2})
\end{cases}
\end{equation}
With the same method of coordinates change, we obtain:
$$
\widetilde{\bq}_{2, \lambda} = 
\begin{cases}
& (-4 x_1 x_2^2 z_2+2 x_1 x_2 z_4-2 x_1 z_2^2+2 x_2^3 z_3+2 z_3 z_4) + -\lambda z_1 \\
& (2 x_2 z_1) -\lambda z_2 \\
& (x_1 z_1-x_2^3 z_2+x_2^2 z_4-x_2 z_2^2+z_2 z_4) -\lambda z_3\\
& (3x_1^2 z_2+3 x_2^2 z_1-6 x_2 z_2+3 z_1 z_2) - \lambda (z_{4}-x_{2}z_{2})
\end{cases}
= \widetilde{\bq}_2 - \lambda \bh(\bx, \bz + \bg(\bx))
$$
where we have
$$
\bh(\bx, \bz + \bg(\bx)) = 
\begin{cases}
    (z_{1}+x_{1}^{2})-x_{1}^{2} = z_{1}\\
    (z_{2}+x_{2}^{2})-x_{2}^{2}= z_{2}\\
    (z_{3}+x_{1}x_{2})-x_{1}x_{2}= z_{3}\\
    (z_{4}+x_{2}^{3})-x_{2}(z_{2}+x_{2}^{2}) = z_{4} - x_{2}z_{2}
\end{cases}
$$
Taking the Jacobian of $\bh$ against $\bz$, we can obtain the following matrix:
$$
\lambda J_{\bz}(\bh))|_{\bx = \bx^\circ, \bz = \mathbf{0}} = \lambda
\begin{bmatrix}
    1 & 0 & 0 & 0 \\
    0 & 1 & 0 & 0 \\
    0 & 0 & 1 & 0 \\
    0 & x_{2} & 0 & 1
\end{bmatrix}
$$
which is a lower-triangular matrix with ones on the diagonal.
\end{example}
\end{document}